\newtheorem{mypro}{Proposition}
\newtheorem{mydef}{Definition}
\newtheorem{myremark}{Remark}
\renewcommand{\hat}{\widehat}
\begin{document}

\title{\LARGE \bf
Learning with Adaptive Conservativeness for Distributionally Robust Optimization: Incentive Design for Voltage Regulation 

}

\author{Zhirui Liang, Qi Li, Joshua Comden, Andrey Bernstein, Yury Dvorkin
\thanks{Zhirui Liang, Qi Li, and Yury Dvorkin are with the Johns Hopkins University, Baltimore MD, U.S. (\{zliang31, qli112, ydvorki1\}@jhu.edu).
Joshua Comden and Andrey Bernstein are with the National Renewable Energy Laboratory, Golden CO, U.S. (\{joshua.comden, andrey.bernstein\}@nrel.gov).}%
\thanks{This work was supported in part by the Risk Aware Power System Control, Dispatch and Market Incentives through ARPA-E PERFORM under Award DE-AR0001300; in part by NSF under Award 2330450, Award 2245507, and Award 2245506. }%
\thanks{This work was authored in part by the National Renewable Energy Laboratory (NREL), operated by Alliance for Sustainable Energy, LLC, for the U.S. Department of Energy (DOE) under Contract No. DE-AC36-08GO28308. This work was supported by the Laboratory Directed Research and Development (LDRD) Program at NREL. The views expressed in the article do not necessarily represent the views of the DOE or the U.S. Government. The U.S. Government retains and the publisher, by accepting the article for publication, acknowledges that the U.S. Government retains a nonexclusive, paid-up, irrevocable, worldwide license to publish or reproduce the published form of this work, or allow others to do so, for U.S. Government purposes.}%
}
\maketitle

\begin{abstract} 
Information asymmetry between the Distribution System Operator (DSO) and Distributed Energy Resource Aggregators (DERAs) obstructs designing effective incentives for voltage regulation. To capture this effect, we employ a Stackelberg game-theoretic framework, where the DSO seeks to overcome the information asymmetry and refine its incentive strategies by learning from DERA behavior over multiple iterations. We introduce a model-based online learning algorithm for the DSO, aimed at inferring the relationship between incentives and DERA responses. Given the uncertain nature of these responses, we also propose a distributionally robust incentive design model to control the probability of voltage regulation failure and then reformulate it into a convex problem. This model allows the DSO to periodically revise distribution assumptions on uncertain parameters in the decision model of the DERA. Finally, we present a gradient-based method that permits the DSO to adaptively modify its conservativeness level, measured by the size of a Wasserstein metric-based ambiguity set, according to historical voltage regulation performance. The effectiveness of our proposed method is demonstrated through numerical experiments.
\end{abstract}

\section{Introduction}
Grid-edge Distributed Energy Resources (DERs), encompassing electricity sources, storage, and demand response programs linked to distribution networks, pose both challenges and opportunities for the Distribution System Operator (DSO) \cite{ding2022unleash}. To tackle issues like forecast uncertainty, monitoring difficulties, and constrained control capabilities, the integration of DER aggregators (DERAs) into a hierarchical control structure has been advocated \cite{zhang2022grid}. This approach allows DSOs to more effectively harness DER flexibility for services like voltage regulation, frequency response, and reserve, though it necessitates well-crafted incentives to boost DER participation. This paper studies the incentive design problem for the DSO, a topic that has gained significant attention \cite{zhou2017incentive, yu2018incentive,yang2023incentivizing}. Specifically, this paper highlights an often-overlooked aspect: DERAs might withhold some information from the DSO, such as the cost and flexibility region of individual DERs, to leverage information asymmetry for profit gains \cite{bialek2021knows}. This necessitates a strategic approach for the DSO to learn the decision-making patterns of DERAs.

Previous research offers various strategies for learning the decision-making models of DERAs, navigating the challenge that the DSO can only observe the responses of DERAs to incentives without direct access to the internal parameters of the DERA decision model. Some studies opt for data-driven and model-free approaches, like the use of neural networks in \cite{zhang2016optimal}, while others apply reverse engineering techniques, such as inverse optimization, to deduce the unseen parameters from observed optimal responses \cite{bian2022demand}. However, these approaches tend to capture a static relationship between incentives and DER responses, overlooking the time-varying nature of parameters within the demand response model.

Several studies have recognized the uncertainty in demand response models and have developed distributionally robust strategies for downstream control tasks. For instance, \cite{mieth2019online, tucker2020constrained} introduced online learning methods to estimate the price sensitivity of demand response participants, incorporating this insight into a distributionally robust optimal power flow model. Similarly, \cite{hassan2020stochastic} formulated distributionally robust control policies using Markov Decision Processes for demand response dispatch. However, these studies typically assume a fixed level of decision conservativeness, which should evolve as parameter learning continues to mirror the increasing confidence in the acquired knowledge. In distributionally robust optimization, this conservativeness level can be quantified by the size of the ambiguity set, as outlined in our previous research \cite{li2024revealing}.


In this paper, we present a model-based online learning algorithm for the DSO to infer the relationship between incentives and DERA responses. Leveraging the learned information, we develop a distributionally robust incentive design model aimed at enhancing voltage regulation in electric power distribution systems. Additionally, we introduce a gradient-based strategy enabling the DSO to adaptively adjust its conservativeness level based on historical voltage regulation performance. This distributionally robust decision-making model with adaptive conservativeness level differs from the \textit{adaptive distributionally robust optimization} proposed in \cite{bertsimas2019adaptive}, which aims to make the best decision for multi-stage problems. It also differs from the decision-aware learning models, e.g.,  uncertainty set learning  as in \cite{wang2023learning,mieth2023prescribed}, which minimize the distance between the expected cost and the best-case cost rather than the worst-case cost as in this paper.  

\section{Overview of Incentive Design Problem}
\label{sec:Preliminary}
\subsection{Incentive-based Voltage Regulation}
The process of incentive-based voltage regulation is characterized by a series of strategic decisions. Initially, the DSO assesses the voltage at all nodes in the distribution system, and then computes incentives for a desired DER response in areas with voltage violations. Then, the DERAs adjust DER settings, leveraging comprehensive control and monitoring over the DER output. Often, this is an iterative process since DER responses will influence voltages and thus the optimal incentives in the future.

This incentive-based voltage regulation problem can be modeled with a Stackelberg game-theoretic framework, which captures the interactions between two players in a leader-follower dynamic. Fig.~\ref{fig:hierarchical_control} shows these interactions for a distribution system with $N_b$ nodes, where the DSO, acting as the single leader, initiates by offering incentives, while the DERAs, as multiple followers in the game, strategically respond by re-dispatching the DER operations. In this paper, we assume a single DERA per node for clarity, though real-world scenarios might feature multiple DERAs per node or a single DERA managing DERs across different nodes. 
\begin{figure}[htbp]
    \centering
    \includegraphics[width=1\linewidth]{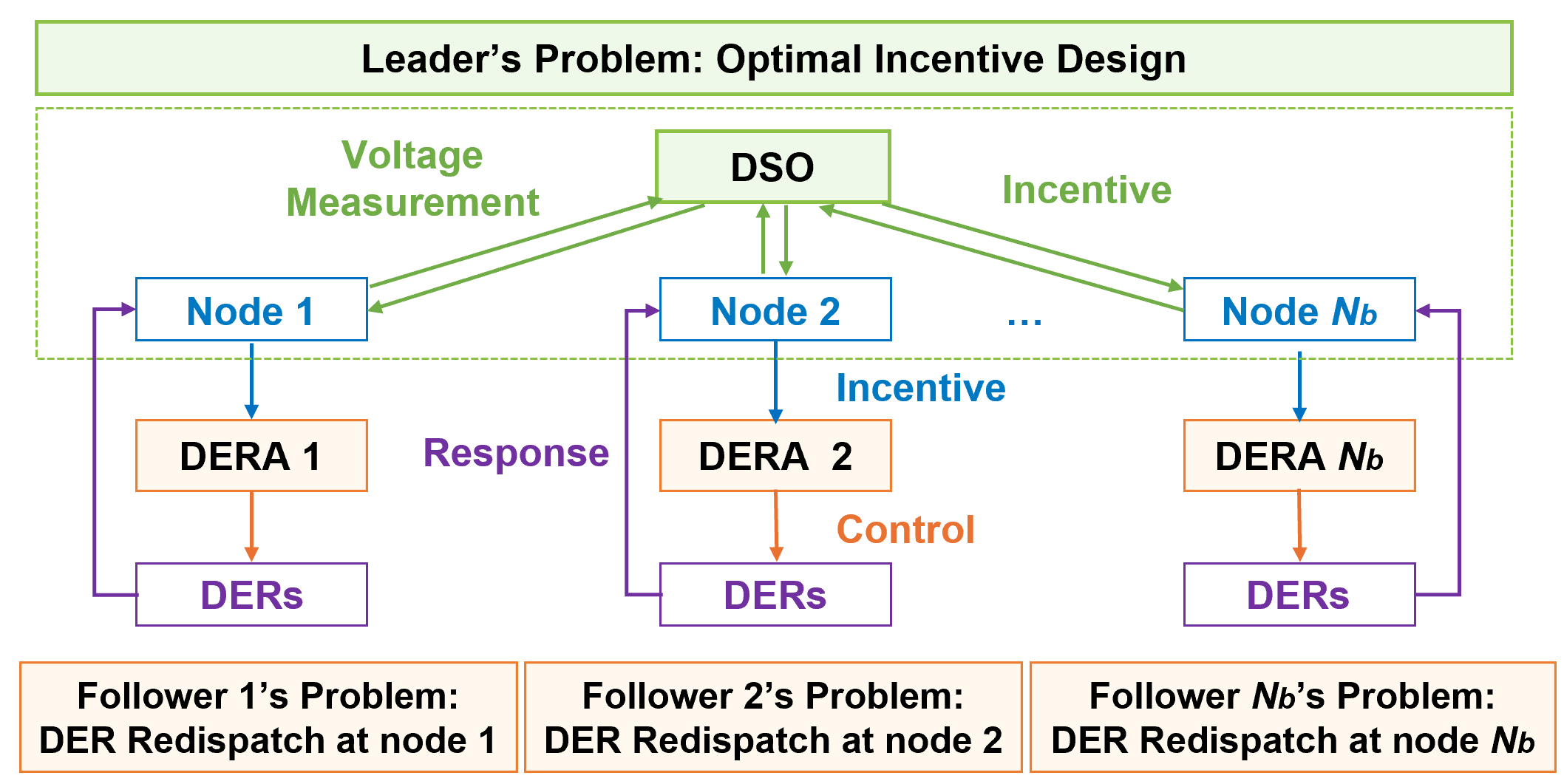}
    \caption{Incentive-based voltage regulation modeled as a Stackelberg game between the DSO (leader) and DERAs (followers). }
    \label{fig:hierarchical_control}
\end{figure}

\subsection{Leader's Problem: Optimal Incentive Design}
The decision-making problem for the DSO is designing incentives to achieve voltage regulation in the most efficient way, which can be formulated as:
\begin{subequations}
\begin{align}
&\min_{\substack{ k_{n}^P,k_{n}^Q, \Delta V_n }}  \   f = \sum\nolimits_{n=1}^{N_b} \left( k_n^{P} \Delta P_{n} + k_n^{Q} \Delta Q_{n} \right) \label{UL:obj}\\
& \mathrm{s.t.} \quad   n= 1,\dots, N_b \nonumber \\ 
    &\quad \Delta V_{n} = \sum\nolimits_{m=1}^{N_b} \alpha_{n,m} \Delta P_{m} + \sum\nolimits_{m=1}^{N_b} \beta_{n,m} \Delta Q_{m} \label{UL:V_P_Q} \\
    &\quad \Delta V_{n}^{\min} \le \Delta V_{n} \le \Delta V_{n}^{\max} \label{UL:V_bounds} \\
    &\quad \Delta P_{n}, \Delta Q_{n} \in \arg\!\min g(\Delta P_{n}, \Delta Q_{n}|k_{n}^P,k_{n}^Q), \label{UL:argmin} 
\end{align}%
\label{mod:UL_DSO}%
\end{subequations}%
\allowdisplaybreaks[0]%
where the decision variables $k_{n}^P$ and $k_{n}^Q$ are the incentives for active and reactive power response at node $n$. 
Objective \eqref{UL:obj} aims to minimize the cost associated with voltage regulation. 
This cost is calculated by summing the products of the incentives and the DERA responses at each node, with $\Delta P_{n}$ and $\Delta Q_{n}$ representing the total active and reactive DER responses at node $n$, as determined by \eqref{UL:argmin}. We assume that the DERs are capable of providing bidirectional responses, meaning the values of $k_{n}^P$, $k_{n}^Q$, $\Delta P_{n}$, and $\Delta Q_{n}$ can be either positive or negative. The auxiliary variable $\Delta V_n$ denotes the voltage change at node $n$, which can be derived from the linearized power flow equation in \eqref{UL:V_P_Q}. In this equation, $\alpha_{n,m}$ and $\beta_{n,m}$ are coefficients linking the voltage change at node $n$ to the DERA response at node $m$ \cite{bernstein2017linear}. Furthermore, \eqref{UL:V_bounds} specifies a permissible range for voltage changes, which is time-varying based on voltage measurements in the system. The response of each DERA is determined by \eqref{UL:argmin}, where  function $g$ will be introduced in the following subsection. 

\subsection{Follower's Problem: DER Redispatch}
Considering the incentives provided by the DSO, along with the operational flexibility and associated costs of each DER, the DERA can identify which set of DERs to re-dispatch, aiming at maximizing its overall profit. The decision-making model for the DERA at node $n$ is:
\begin{subequations}
\begin{align}
\min_{\substack{ \Delta P_{r},\Delta Q_{r}}}  \  & g= \sum\nolimits_{r=1}^{R_n} s_{r} \Delta P_{r} - k_n^{P} \Delta P_{n} - k_n^{Q} \Delta Q_{n} \label{LL:obj}\\
\mathrm{s.t.}  \quad &  r= 1,\dots, R_{n} \nonumber \\ 
  & \Delta P_{r}^{\min} \le \Delta P_{r} \le \Delta P_{r}^{\max} \label{LL:P_limits}\\
  & \phi_r^{\min} \Delta P_{r} \le \Delta Q_{r} \le \phi_r^{\max} \Delta P_{r} \label{LL:P_Q_region} \\
  & \Delta P_{n} = \sum\nolimits_{r=1}^{R_n} \Delta P_{r} \label{LL:P_total} \\
  & \Delta Q_{n} = \sum\nolimits_{r=1}^{R_n} \Delta Q_{r}\label{LL:Q_total} \\
  & k_{n}^P,k_{n}^Q \in \arg\!\min f(k_{n}^P,k_{n}^Q|\Delta P_{n}, \Delta Q_{n}), \label{LL:argmin} 
\end{align}%
\label{mod:LL_DERA}%
\end{subequations}%
\allowdisplaybreaks[0]%
where $R_n$ is the total number of DERs at node $n$, $\Delta P_{r}$ and $\Delta Q_{r}$ are the active and reactive power response of the $r$-th DERs at node $n$, respectively. For simplicity, it is assumed that only $\Delta P_{r}$ incurs a specific unit cost $s_{r}$, whereas $\Delta Q_{r}$ is a by-product of $\Delta P_{r}$ and does not entail any direct cost. The linear feasible regions of $\Delta P_{r}$ and $\Delta Q_{r}$ are enforced by \eqref{LL:P_limits} and \eqref{LL:P_Q_region}, respectively, while the aggregated DER responses at node $n$ are calculated by \eqref{LL:P_total} and \eqref{LL:Q_total}.

\section{Learning for Incentive Design}
\label{sec:learning}
If the DSO has complete knowledge about the decision-making models used by DERAs, including the cost and flexibility of each DER, the optimal incentive design problem could be framed as a bi-level optimization problem, where the upper level model is \eqref{mod:UL_DSO} and the lower level model is \eqref{mod:LL_DERA}. However, the limited knowledge available to the DSO complicates the direct implementation of this bi-level optimization strategy  and necessitates a learning approach. In this section, we introduce an online learning algorithm, where the DSO can infer the underlying decision-making models of DERAs by continuously observing DERA responses to incentives within the Stackelberg game context.


\subsection{Learning with Surrogate Model}
The DSO seeks to identify the mapping from the incentives $k_n^P, k_n^Q$ to the responses $\Delta P_{n},\Delta Q_{n}$ through a learning process and to use this mapping in \eqref{UL:argmin}. The relationship between incentives and responses is essentially determined by the decision-making model of the DERA detailed in \eqref{mod:LL_DERA}. However, the DSO cannot directly infer the model in \eqref{mod:LL_DERA} due to two main factors. First, the DSO has access only to the aggregated demand response at each node ($\Delta P_n$ and $\Delta Q_n$) without visibility into individual DER responses ($\Delta P_r$ and $\Delta Q_r$). Second, several parameters in \eqref{mod:LL_DERA}, including the costs $s_r$ and the feasible ranges for $\Delta P_r$ and $\Delta Q_r$ determined by \eqref{LL:P_limits} and \eqref{LL:P_Q_region}, are time-varying due to the variability of DER resources. Therefore, the DSO has to learn this mapping relationship based on a surrogate model. Without losing generality (as discussed in Section \ref{sec:prop} below), we assume the DSO employs the following model to approximate \eqref{mod:LL_DERA}:
\begin{subequations}
\begin{align}
   \min_{\substack{ \Delta P_{n},\Delta Q_{n}}}  \quad  & c_{n} \Delta P_{n}^2 + d_{n} \Delta P_{n}- k_n^{P} \Delta P_{n} - k_n^{Q} \Delta Q_{n} \label{surrogate:obj}\\
\text{s.t. } \quad 
   &  \Delta Q_{n} = \phi_{n} \Delta P_{n} \label{surrogate:constraint}.
\end{align}%
\label{mod:surrogate}%
\end{subequations}%
\allowdisplaybreaks[0]%
where $c_{n}$ and $d_n$ are the coefficients of the quadratic and linear terms, respectively, in the aggregated cost function at node $n$, while $\phi_{n}$ is the ratio of the total reactive to active power responses at node $n$. Given the time-varying nature of values in the set of $\mathcal{S}=\{c_{n,t},d_{n,t},\phi_{n,t}\}_{n=1}^{N_b}$, they can be modeled as uncertain parameters, with their distribution characterized by samples. We will then demonstrate the process for deriving a set of samples for these uncertain parameters using observations at time $t$, denoted by $\{\Delta P_{n,t},\Delta Q_{n,t},k_{n,t}^P, k_{n,t}^Q\}_{n=1}^{N_b}$.

According to \eqref{surrogate:constraint}, $\phi_{n,t}$ can be calculated directly by:
\begin{align}
    \phi_{n,t} = {\Delta Q_{n,t}}/{\Delta P_{n,t}} \label{mod:estimate_phi}
\end{align}
By replacing $\Delta Q_{n,t}$ with $\phi_{n,t} \Delta P_{n,t}$ in \eqref{mod:surrogate}, we transform it into an unconstrained optimization problem. For optimality, it requires the following first-order condition to hold:
\begin{align}
   2 { c_{n,t}} \Delta P_{n,t} +  d_{n,t} - k_{n,t}^{P} - k_{n,t}^{Q} \phi_{n,t} =0 \label{mod:estimate_c}
\end{align}
However, \eqref{mod:estimate_c} alone does not suffice to uniquely determine the specific values of $c_{n,t}$ and $d_{n,t}$. We need another time instance $t'$ to derive an additional equation involving $c_{n,t'}$ and $d_{n,t'}$. By assuming that $c_{n,t} = c_{n,t'}$ and $d_{n,t} = d_{n,t'}$, we can calculate their values using these two equations. With this new set of estimated parameter values, the DSO can refine the distributions of $c_{n}$, $d_{n}$, and $\phi_{n}$ by incorporating these new samples into their historical sample sets.

Consider a sample set of $\{\mathcal{S}\}_{N_s}$ with $N_s$ samples for each uncertain parameter. Based on each sample, we can derive the optimal response of the DERA at node $n$ as a linear function of the incentives at node $n$ based on the Karush-Kuhn-Tucker (KKT) conditions of \eqref{mod:surrogate}:
\begin{align}
   \Delta P_{n,i} &= \frac{1}{2 c_{n,i}}k_n^P + \frac{\phi_{n,i}}{2 c_{n,i}} k_n^Q - \frac{d_{n}}{2 c_{n,i}},\ i=1,...,N_s  \label{relation_P} \\
   \Delta Q_{n,i} & = \frac{\phi_{n,i}}{2 c_{n,i}}k_n^P + \frac{\phi_{n,i}^2}{2 c_{n,i}} k_n^Q - \frac{\phi_{n,i} d_{n,i}}{2 c_{n,i}},\ i{\rm{=}}1,...,N_s \label{relation_Q} 
\end{align}

\subsection{Properties of Surrogate Model} \label{sec:prop}
The selection of the surrogate model in \eqref{mod:surrogate} is justified by three key reasons:
\begin{enumerate}
    \item The surrogate model \eqref{mod:surrogate} preserves certain characteristics of the original DERA decision model in \eqref{mod:LL_DERA}. For instance, the quadratic function in \eqref{surrogate:obj} is a good approximation of the piece-wise linear function with an incrementally increasing slope in \eqref{LL:obj}.
    \item The surrogate model \eqref{mod:surrogate} features unknown parameters that can be accurately inferred using a relatively limited set of observations, i.e., through \eqref{mod:estimate_phi} and \eqref{mod:estimate_c}.
    \item The surrogate model \eqref{mod:surrogate} facilitates the derivation of an explicit mapping relationship from $k_n^P, k_n^Q$ to $\Delta P_{n},\Delta Q_{n}$, as shown in \eqref{relation_P} and \eqref{relation_Q}.
\end{enumerate}
We recognize that \eqref{mod:surrogate} is not the only option for a surrogate model. Any model meeting the aforementioned three criteria could serve as an effective surrogate of \eqref{mod:LL_DERA} and be used in the learning process. 

The surrogate model \eqref{mod:surrogate}, while sharing characteristics with the actual DERA decision model in \eqref{mod:LL_DERA}, diverges in terms of modeling granularity. Specifically, \eqref{mod:surrogate} operates at the node level, whereas \eqref{mod:LL_DERA} focuses on individual DERs. This difference raises questions about the effectiveness of the proposed learning algorithm in capturing the relationship between incentives and DERA responses.

We claim that, despite the difference in granularity, \eqref{mod:surrogate} can effectively capture the essential relationship between $k_n^P, k_n^Q$ and $\Delta P_{n},\Delta Q_{n}$ determined by \eqref{mod:LL_DERA}. This effectiveness is primarily because the DSO is concerned with the aggregated response at each node, rather than individual DER responses. For instance, if the constraint of $\Delta P_r \le \Delta P_r^{\max}$ is not binding, variations in $\Delta P_r^{\max}$ will not impact the aggregate responses $\Delta P_{n}$ and $\Delta Q_{n}$. Define the set of uncertain parameters in \eqref{mod:LL_DERA} as $\mathcal{S}^\text{full} = \{c_{r},\Delta P_{r}^{\min},\Delta P_{r}^{\max}, \phi_{r}^{\min}, \phi_{r}^{\max}, r=1,...R_n, n = 1,...,N_b \}$. Essentially, the distribution of $\mathcal{S}$ originates from a specific subset of the distribution of $\mathcal{S}^\text{full}$, but only this subset is vital for the mapping from $k_n^P, k_n^Q$ to $\Delta P_{n},\Delta Q_{n}$. Consequently, the distribution of $\mathcal{S}$ captures all the variations in $\mathcal{S}^\text{full}$ that matter to the DSO.

Additionally, the effectiveness of the proposed learning strategy can be further improved. For instance, instead of directly learning distributions $P(c_{n})$, $P(d_{n})$, and $P(\phi_{n})$, it may be more beneficial to learn conditional distributions $P(c_{n}|\Psi_n)$, $P(d_{n}|\Psi_n)$, and $P(\phi_{n}|\Psi_n)$, where $\Psi_n$ includes all pertinent external factors like time and temperature that could affect the availability of flexible DERs. 

\section{Distributionally Robust Incentive Design}
\label{sec:DRO}
By replacing \eqref{UL:argmin} with \eqref{relation_P} and \eqref{relation_Q}, we convert \eqref{mod:UL_DSO} into a single-level robust optimal incentive design model that is more computationally tractable. However, employing robust optimization strategies can lead to overly conservative decisions. Additionally, future instances of uncertain parameters may deviate from their historical patterns, potentially leading to unforeseen adverse scenarios. In response, this section introduces a distributionally robust incentive design approach that accounts for the empirical distribution of uncertainties based on all available samples and accommodates deviations within a predetermined range from the empirical distribution. We model voltage limits as a chance constraint, which effectively manages the risk of voltage level violations at each node. We also provide a convex reformulation of this distributionally robust incentive design model, facilitating its solution with standard optimization solvers.

\subsection{Distributionally Robust Incentive Design Model}
To streamline our notation, we introduce a new vector $\xi_n$ for node $n$, encapsulating five uncertainties derived from the three initial uncertain parameters $c_{n},d_{n},\phi_{n}$ in \eqref{mod:surrogate}:
\begin{equation}
    \xi_{n} {\rm{=}} \left[\xi_{n,1},...,\xi_{n,5}\right] {\rm{=}}\left[\frac{1}{2 c_{n}},\frac{\phi_{n}}{2 c_{n}}, \frac{{\rm{-}}d_{n}}{2 c_{n}}, \frac{\phi_{n}^2}{2 c_{n}},  \frac{{\rm{-}}\phi_{n} d_{n}}{2 c_{n}}\right] \nonumber
\end{equation}
The distributionally robust incentive design model is: 
\begin{subequations}
\begin{align}
& \inf_{\substack{ k_{n}^P,k_{n}^Q}}  \ \sup_{\mathbb{P} \in \mathcal{A}} \ \mathbb{E}^{\mathbb{P}} \left[ \sum\nolimits_{n=1}^{N_b} \left(k_n^{P} \Delta P_{n} + k_n^{Q} \Delta Q_{n} \right) \right] \label{uncertain:obj}\\
&\mathrm{s.t.} \quad  n= 1,\dots, N_b \nonumber \\
    & \quad \Delta P_{n}= \xi_{n,1} k_n^P + \xi_{n,2} k_n^Q + \xi_{n,4} \label{uncertain_P_calculation} \\
    & \quad \Delta Q_{n}= \xi_{n,2} k_n^P + \xi_{n,3} k_n^Q + \xi_{n,5} \label{uncertain_Q_calculation}\\
    & \quad \Delta V_{n} = \sum\nolimits_{m =1}^{N_b} \alpha_{n,m} \Delta P_{m} + \sum\nolimits_{m =1}^{N_b} \beta_{n,m} \Delta Q_{m}\label{uncertain:V} \\
    & \quad \inf_{\mathbb{P} \in \mathcal{A}} \ \mathbb{P}\left\{
         \begin{array}{c}
              \Delta V_n^{\min}  \le \Delta V_{n} \\
              \Delta V_{n} \le \Delta V_{n}^{\max} 
         \end{array},n{\rm{=}} 1,..., N_b \right\} \geq 1{\rm{-}}\gamma \label{uncertain:V_bound} 
\end{align}%
\label{mod:uncertain}%
\end{subequations}%
\allowdisplaybreaks[0]%
This model assumes that the DSO adopts a risk-averse stance regarding the voltage regulation problem. The objective in \eqref{uncertain:obj} aims to minimize the voltage regulation cost in the worst-case scenario. The joint chance constraint in \eqref{uncertain:V_bound} guarantees that the uncertain value of $\Delta V_{n}$ remains within the prescribed limits, $\Delta V_n^{\min}$ and $\Delta V_n^{\max}$, with a minimum probability of $1-\gamma$. The distribution of $\xi_n$ is permitted to vary within an ambiguity set $\mathcal{A}$. In this paper, we construct this ambiguity set through the application of the Wasserstein metric due to its superior out-of-sample performance \cite{mohajerin2018data}.

\subsection{Wasserstein Metric-based Ambiguity Set} 
We first give the definition of the Wasserstein metric, a measure of distance between two probability distributions:
\begin{mydef}    
(\textbf{Wasserstein Distance}) The Wasserstein metric $d_{W}(\mathbb{P}_1, \mathbb{P}_2): \mathcal{M}(\Xi) \times \mathcal{M}(\Xi) \rightarrow \mathbb{R}$ is defined by:
    \begin{equation} \nonumber
        d_{W}(\mathbb{P}_1,\mathbb{P}_2) = \inf_{\pi\in \Pi(\mathbb{P}_1,\mathbb{P}_2)}\left\{ \int_{\Xi^2}\Vert \xi_1-\xi_2 \Vert\pi(d\xi_1,d\xi_2) \right\},
    \end{equation}
where $\Xi$ is the domain of a random variable $\xi$ (i.e., $\xi \in \Xi$), $\mathcal{M}(\Xi)$ contains all distributions defined on $\Xi$, $\Pi(\mathbb{P}_1,\mathbb{P}_2)$ is the set of joint probability distributions of $\xi_1,\xi_2$ with marginal distributions of $\mathbb{P}_1$ and $\mathbb{P}_2$ \cite{mohajerin2018data}.
\end{mydef}

Given $N_s$ historical samples $\{ \hat{\xi}_i\}_{i=1}^{N_s}$, the empirical distribution $\hat{\mathbb{P}}_{N_s}$ can be derived as:
\begin{equation} \label{empirical distribution}
    \hat{\mathbb{P}}_{N_s} = \frac{1}{N_s} \sum\nolimits_{i=1}^{N_s} \delta_{\hat{\xi}_i},
\end{equation}
where $\delta_{\hat{\xi}_i}$ is a Dirac distribution centered at $\hat{\xi}_i$. The ambiguity set based on these historical samples is defined as a Wasserstein ball centered at $\hat{\mathbb{P}}_{N_s}$: 
\begin{equation} \label{ambiguity set}
    \mathcal{B}_{\epsilon}(\hat{\mathbb{P}}_{N_s}) = \left\{ \mathbb{P} \in \mathcal{M}(\Xi) \left\vert d_{W}\left(\mathbb{P}, \hat{\mathbb{P}}_{N_s}\right) \right.\leq \epsilon \right\},
\end{equation}
where $\epsilon$ is the radius of the Wasserstein ball. Here we treat $\epsilon$ as a predefined value. In Section~\ref{sec:Adaptive_DRO}, we will study the method for adaptively adjusting the value of $\epsilon$ to optimize performance in voltage regulation.

\subsection{Convex Formulation of Distributionally Robust Model}
The distributionally robust incentive design model in \eqref{mod:uncertain} presents challenges for direct solution until it is converted into a convex optimization problem. This conversion leverages the conditional value-at-risk (CVaR) and the Wasserstein metric-based ambiguity set. To maintain clarity and conciseness, we omit the detailed steps of this reformulation. Readers seeking a deeper understanding of these steps are directed to \cite{mohajerin2018data} for further information.

For leaner notation in the following derivations, we collect all decision variables in vector $x \in \mathbb{R}^{6 N_b}$ and all uncertainties in vector $\xi \in \mathbb{R}^{6 N_b}$, i.e., 
\begin{align}
    &x = \large[(k_1^P)^2,\dots, (k_{N_b}^P)^2, 2k_1^P k_1^Q, \dots, 2k_{N_b}^P k_{N_b}^Q, \nonumber \\
    & \qquad (k_1^Q)^2 ,\dots, (k_{N_b}^Q)^2, k_1^P,\dots, k_{N_b}^P, k_1^Q ,\dots, k_{N_b}^Q ,\bm{1}_{N_b}\large]^{\top}\nonumber \\
    &\xi = \left[\xi_{1,1},\dots, \xi_{{N_b},1}, \dots, \xi_{1,5},\dots, \xi_{{N_b},5},\bm{0}_{N_b}\right]^{\top}, \nonumber
\end{align}
where $\bm{1}_{N_b}$ and $\bm{0}_{N_b}$ denote all-one and all-zero vectors of length ${N_b}$. Then, we can re-write \eqref{mod:uncertain} as:
\begin{subequations} \label{mod:general_DRO}
    \begin{align}
        \inf_{x \in \mathcal{F}}  \ & \sup_{\mathbb{P} \in \mathcal{B}_{\epsilon}} \mathbb{E}^{\mathbb{P}} \left[\langle x, \xi \rangle \right] \label{general_DRO:obj}\\
        \mathrm{s.t.} \
         & \inf_{\mathbb{P} \in \mathcal{B}_{\epsilon}}  \mathbb{P} \left\{ \max_{k=1,...,2{N_b}} \left[  a_{k}^{\prime} \xi {\rm{+}} b_{k}^{\prime} \right] \leq 0\right\} \geq 1 {\rm{-}} \gamma, \label{general_DRO:constraint}
    \end{align}
\end{subequations}
where $\mathcal{F}$ is the feasible space of $x$, 
and $\mathbb{P}$ is the possible distribution of $\xi$ constrained by a Wasserstein ball $\mathcal{B}_{\epsilon}$.
In \eqref{general_DRO:constraint}, $a^{\prime}_k$ is the $k$-th row of this $2N_b\times 6N_b$ matrix:
\begin{align}
    \nonumber
    &\begin{bmatrix}
    ({\rm{-}}W_1 x)^{\top} \\
    \vdots \\
    ({\rm{-}}W_{N_b} x)^{\top} \\
    (W_1 x)^{\top} \\
    \vdots \\
    (W_{N_b} x)^{\top}
    \end{bmatrix}\text{, } 
    W_n =
        \begin{bmatrix} 
        {{\mathcal{O}_{6 \times 3}}}&{\begin{matrix}
        A_n & \mathcal{O}_{1\times 1} & \mathcal{O}_{1\times 1}\\
        B_n & A_n & \mathcal{O}_{1\times 1} \\
        \mathcal{O}_{1\times 1} & B_n & \mathcal{O}_{1\times 1} \\
        \mathcal{O}_{1\times 1} & \mathcal{O}_{1\times 1} & A_n \\
        \mathcal{O}_{1\times 1} & \mathcal{O}_{1\times 1} & B_n \\
        \mathcal{O}_{1\times 1} & \mathcal{O}_{1\times 1} & \mathcal{O}_{1\times 1}
        \end{matrix}}
        \end{bmatrix}  \nonumber\\
    &A_n = \mathcal{D}(\alpha_{n,1},\dots,\alpha_{n,{N_b}}),\
    B_n = \mathcal{D}(\beta_{n,1},\dots,\beta_{n,{N_b}}), \nonumber 
\end{align}
where $\mathcal{D}(\cdot)$ creates a diagonal matrix from a vector, $\mathcal{O}_{m\times n} $ is a all-zero matrix with a dimension of $mN_b \times n N_b$. Similarly, $b^{\prime}_k$ is the $k$-th entry of the following $2N_b\times 1$ vector:
\begin{equation}
    \left[\Delta V_1^{\min},\dots, \Delta V_{N_b}^{\min}, -\Delta V_1^{\max},\dots, -\Delta V_{N_b}^{\max}\right]^{\top} \nonumber
\end{equation}
Following \cite{mohajerin2018data}, we can transform \eqref{mod:general_DRO} into the following convex problem:
\begin{subequations} \label{mod:convex_DRO}
    \begin{align}
        & \min_{x,\tau,\lambda^{co}_{j},\lambda^{cc}_{j},s^{co}_{ji},s^{cc}_{ji}} \quad \sum\nolimits_{j=1}^{6N_b} \left( \lambda^{co}_{j} \epsilon {\rm{+}} \frac{1}{N_s} \sum\nolimits_{i=1}^{N_s} s^{co}_{ji}\right) \label{convex_DRO:obj} \\ 
        & \mathrm{s.t.} \ i = 1,\dots,N_{s}, \ j=1,\dots,6N_b, \ k = 0,\dots,2N_b \nonumber \\
        & \quad (\rho_{ij}^{1}) \ s^{co}_{ji}  \ge x_{j} \hat{\xi}_{ji} \label{convex_DRO:obj_1}\\
         &\quad (\rho_{ij}^{2}) \ s^{co}_{ji} \ge x_{j} \overline{\xi}_{j} - \lambda^{co}_{j} \vert \overline{\xi}_{j} - \hat{\xi}_{ji}\vert  \label{convex_DRO:obj_2} \\
         &\quad (\rho_{ij}^{3}) \ s^{co}_{ji} \ge x_{j} \underline{\xi}_{j} - \lambda^{co}_{j} \vert \underline{\xi}_{j} - \hat{\xi}_{ji}\vert \label{convex_DRO:obj_3} \\
         & \quad  (\mu)\ \gamma \tau  +  \sum\nolimits_{j=1}^{6N_b} \left( \lambda^{cc}_{j} \epsilon + \frac{1}{N_s} \sum\nolimits_{i=1}^{N_s} s^{cc}_{ji}\right) \leq 0 \label{convex_DRO:CVaR}\\
         & \quad (\phi_{ikj}^{1}) \ s^{cc}_{ji} \ge  a_{kj} \hat{\xi}_{ji} + b_k \label{convex_DRO:CVaR_1}\\
         &\quad (\phi_{ikj}^{2}) \ s^{cc}_{ji} \ge a_{kj} \overline{\xi}_{j} + b_k - \lambda^{cc}_{j} \vert \overline{\xi}_{j} - \hat{\xi}_{ji}\vert \label{convex_DRO:CVaR_2}\\
         &\quad (\phi_{ikj}^{3}) \ s^{cc}_{ji} \ge  a_{kj} \underline{\xi}_{j} + b_k - \lambda^{cc}_{j} \vert \underline{\xi}_{j} - \hat{\xi}_{ji}\vert \label{convex_DRO:CVaR_3}\\
         & \quad  \lambda^{co}_{j},\ \lambda^{cc}_{j} \geq 0,
    \end{align}%
\end{subequations}%
\allowdisplaybreaks[0]%
Eq.~\eqref{convex_DRO:obj}-\eqref{convex_DRO:obj_3} are the \textit{exact} reformulation of the objective in \eqref{general_DRO:obj}, while  \eqref{convex_DRO:CVaR}-\eqref{convex_DRO:CVaR_3} are the \textit{inner} approximation of the chance constraint in \eqref{general_DRO:constraint}.
This reformulation is derived by introducing $\overline{a}_{0} = \bm{0}_{6N_b}$, $\overline{b}_{0} = 0$, $\overline{a}_{k} = a_{k}^{\prime}$, and $\overline{b}_{k} = b_{k}^{\prime} - \tau$, $k=1,...,2N_{b}$. Then, $a_{kj}$ is the $j$-th coordinate of $\overline{a}_{k}$, $b_k = \overline{b}_k/6N_b$, $\underline{\xi}_j$ and $\overline{\xi}_j$ are the lower and upper bounds, respectively, of the $j$-th coordinate of the domain of $\xi$. Additionally, $\tau$, $\lambda_{j}^{co}$, $\lambda_{j}^{cc}$, $s_{ji}^{co}$ and $s_{ji}^{cc}$ are auxiliary variables.
The superscript in $\lambda_j^{co}$ and $s_{ji}^{co}$ highlights its connection to the cost function reformulation. Similarly, the superscript in $\lambda_j^{cc}$ and $s_{ji}^{cc}$ highlight its connection to the chance-constraint reformulation. Greek letters listed on the left side of \eqref{convex_DRO:CVaR}-\eqref{convex_DRO:obj_3} are dual variables of the corresponding constraints.

Model \eqref{mod:convex_DRO} constitutes a finite-dimensional convex problem that can be efficiently solved using commercial optimization solvers. The number of constraints in \eqref{mod:convex_DRO} correlates directly with $N_s$, which is the number of samples factored into the empirical distribution $\hat{\mathbb{P}}_{N_s}$.

\section{Conservativeness Update in Incentive Design}
\label{sec:Adaptive_DRO}
In the distributionally robust incentive design model \eqref{mod:convex_DRO}, the size of the Wasserstein metric-based ambiguity set, indicated by the value of $\epsilon$, determines the conservativeness of the decision-makers, mirroring their confidence in the empirical distribution of $\xi$. A larger $\epsilon$ incorporates a broader range of possible distributions in the decision making model, leading to decisions that are more conservative. Intuitively, as learning in Section~\ref{sec:learning} progresses, the DSO will obtain an empirical distribution of $\xi$ that increasingly better approximates its true distribution. Consequently, the DSO should become less conservative in incentive design. 

In this section, we propose a method for assessing the suitability of the conservativeness level \textit{post hoc} and a conservativeness updating algorithm to optimize voltage regulation performance. 
Fig.\ref{fig:adaptive} illustrates how the conservativeness level can be updated within the proposed framework. The inner loop depicted in Fig.\ref{fig:adaptive} combines the distributionally robust incentive design model introduced in Section~\ref{sec:DRO} (i.e., the green block in Fig.~\ref{fig:adaptive}) with the DERA decision-making model outlined in Section~\ref{sec:Preliminary} (i.e., the orange block in Fig.~\ref{fig:adaptive}). After executing the inner loop several times, such as over a day or multiple days, the DSO can use the learning approach detailed in Section~\ref{sec:learning} to refine its understanding of the empirical distribution (i.e., the red block in Fig.~\ref{fig:adaptive}). Meanwhile, the DSO can adjust its level of conservativeness using the algorithm described in this section (i.e., the yellow block in Fig.~\ref{fig:adaptive}). This belief update process, or outer loop, is repeated until the empirical distribution of uncertainty  approximates the true distribution acceptably well, and the conservativeness level $\epsilon$ stabilizes at an optimal value.
\begin{figure}[htbp]
    \centering
    \includegraphics[width=1\linewidth]{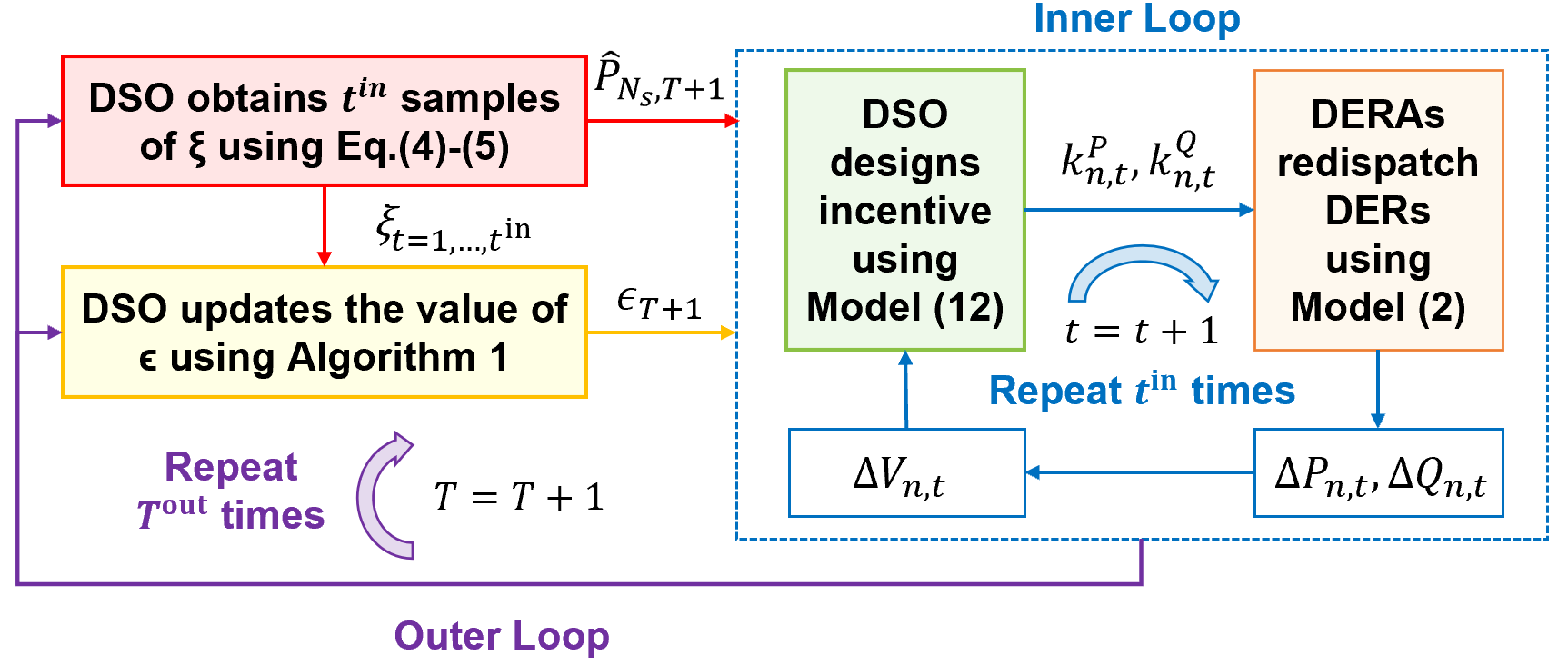}
    \caption{Distributionally robust incentive design for voltage regulation with learning and adaptive conservativeness level. }
    \label{fig:adaptive}
\end{figure}

\subsection{Adequacy of Current Conservativeness Level}
To determine the adequacy of the current level of conservativeness, we need to understand the rationale behind employing a distributionally robust approach for incentive design. The DSO has two objectives when solving the distributionally robust incentive design model in \eqref{mod:uncertain}:
\begin{enumerate}
    \item Objective 1: Minimize the discrepancy between the expected and actual worst-case costs associated with voltage regulation, as outlined in \eqref{uncertain:obj}.
    \item Objective 2: Ensure the expected worst-case voltage violation rate, denoted by $\gamma$ in \eqref{uncertain:V_bound}, aligns closely with the actual occurrence rate of such violations.
\end{enumerate}
In this paper, the conditional value-at-risk (CVaR) is used to measure the violation rate of a chance-constraint. As demonstrated in \cite{rockafellar2000optimization}, enforcing a CVaR $\le 0$ ensures that the corresponding chance constraint is satisfied.

Throughout the time steps in the inner loop, the expected worst-case cost at each specific time $t$, labeled as $\text{Cost}_t^{exp}$, is equal to the objective value of \eqref{convex_DRO:obj} at time $t$. The expected CVaR, denoted as $\text{CVaR}^{exp}$, is a statistical value derived from results over all time steps. Specifically, we have $\text{CVaR}^{exp} = 0$ if the chance constraint in \eqref{uncertain:V_bound} is consistently met in the incentive design process.


One the other hand, the actual cost and CVaR, denoted by $\text{Cost}_t^{act}$ and $\text{CVaR}^{act}$ respectively, are determined by the actual DERA responses over the past $t^\text{in}$ steps:
\begin{align}
    & \text{Cost}_t^{act} {\rm{=}} \sum\limits_{n=1}^{N_b} \left(k_{n,t}^{P} \Delta P_{n,t} {\rm{+}} k_{n,t}^{Q} \Delta Q_{n,t} \right),\ t=1,...,t^\text{in} \label{actual_Cost} \\ 
    & \text{CVaR}^{act} {\rm{=}}  \inf_{\tau} \gamma \tau {\rm{+}} \frac{1}{t^\text{in}} \sum\limits_{t=1}^{t^\text{in}} \left[ \max_{k=0,1,...,2N_{b}} \langle \overline{a}_k, \xi_t \rangle {\rm{+}} \overline{b}_k \right]. \label{actual_CVaR}
\end{align}
Calculating $\text{CVaR}^{act}$ requires solving the optimization problem in \eqref{actual_CVaR} using $\xi_t$, which are the realizations of uncertainty $\xi$ over the past $t^\text{in}$ steps. These new realizations of $\xi$ are estimated by the DSO using the learning method in Section~\ref{sec:learning}.

We establish the following loss function to evaluate the performance of $\epsilon$ over the preceding $t^\text{in}$ time steps: 
\begin{equation}
    \mathcal{L}(\epsilon)= (\text{Cost}_{\underline t}^{exp} - \text{Cost}_{\underline t}^{act})^2+(\text{CVaR}^{exp} - \text{CVaR}^{act})^2, \label{loss_function}
\end{equation}
where $\underline{t}$ denotes the time when the actual worst-case scenario occurs, i.e., when the discrepancy between the expected worst-case cost and the actual cost is minimized:
\begin{equation}
\underline t = {\arg \min}_{t=1,...,t^\text{in}} \vert \text{Cost}_t^{exp} - \text{Cost}_t^{act} \vert \label{worst_case}
\end{equation}
It is sufficient to focus on $ \underline t$ in the loss function \eqref{loss_function} because the actual costs in scenarios other than the worst-case do not reflect the performance of risk-averse incentives and, therefore, should not be employed to update $\epsilon$.

\subsection{Conservativeness Update Algorithm}
Based on the loss function in \eqref{loss_function}, we can use the gradient descent method to find the optimal value of $\epsilon$ following the update rule of:
\begin{equation}
    \epsilon' = \epsilon -\chi \frac{\partial \mathcal{L}(\epsilon)}{\partial \epsilon}  ,\label{updating_rule}
\end{equation}
where $\chi$ is a predefined learning rate. The gradient term in \eqref{updating_rule} can be further expressed as:
\begin{align}
    \frac{\partial \mathcal{L}(\epsilon)}{\partial \epsilon} &=  2(\text{Cost}_{\underline t}^{exp} - \text{Cost}_{\underline t}^{act}) \frac{\partial \text{Cost}(\epsilon)}{\partial \epsilon} \nonumber \\
    & + 2(\text{CVaR}^{exp} - \text{CVaR}^{act}) \frac{\partial \text{CVaR}(\epsilon)}{\partial \epsilon}, \label{gradient}
\end{align}
Following the convex formulation in \eqref{mod:convex_DRO}, the worst-case cost and CVaR can be written as functions of $\epsilon$:
\begin{align}
    & \text{Cost}(\epsilon) = \sum\nolimits_{j=1}^{6N_b} \left( \lambda^{co}_{j} \epsilon + \frac{1}{N_s} \sum\nolimits_{i=1}^{N_s} s^{co}_{ji}\right) \label{function_Cost} \\
    & \text{CVaR}(\epsilon) {\rm{=}} \mu \left[\tau \gamma {\rm{+}}  \sum\nolimits_{j=1}^{6N_b} \left( \lambda^{cc}_{j} \epsilon {\rm{+}} \frac{1}{N_s} \sum\nolimits_{i=1}^{N_s} s^{cc}_{ji}\right) \right] \label{function_CVaR},
\end{align}
where \eqref{function_Cost} is the same as the objective function in \eqref{convex_DRO:obj}, and \eqref{function_CVaR} is the complementary slackness condition based on \eqref{convex_DRO:CVaR}. Therefore, the two partial derivatives in \eqref{gradient} can be calculated as:
\begin{align}
    & \frac{\partial \text{Cost}(\epsilon)}{\partial \epsilon} = \sum\nolimits_{j=1}^{6 N_b}\lambda_{j,\underline t}^{co}, \label{gradient_Cost} \\
    & \frac{\partial \text{CVaR}(\epsilon)}{\partial \epsilon} = \frac{1}{t^\text{in}} \sum\nolimits_{t=1}^{t^\text{in}} \mu_t \sum\nolimits_{j=1}^{6 N_b}\lambda_{j,t}^{cc}, \label{gradient_CVaR}
\end{align}
where $\lambda_{j,t}^{co}$ and $\lambda_{j,t}^{cc}$ are auxiliary variables in \eqref{mod:convex_DRO} and can be obtained directly from solving \eqref{mod:convex_DRO}, while $\mu_t$ is a dual variable in \eqref{mod:convex_DRO} and can be returned directly by some optimization solvers or can be derived by the KKT conditions of \eqref{mod:convex_DRO}. The two partial derivatives in \eqref{gradient_Cost} and \eqref{gradient_CVaR} are always non-negative since \eqref{mod:convex_DRO} requires $\lambda_{j,t}^{co}, \lambda_{j,t}^{co}, \mu_t \ge 0$ for any $j$ and $t$.
Algorithm~\ref{alg:adaptive} summarizes this gradient-based update of $\epsilon$.
\begin{algorithm}[t]
    \small
    \SetAlgoLined
    \SetKwInOut{Input}{input}\SetKwInOut{Output}{output}
    \Input{$\{k_{n,t}^P,k_{n,t}^Q,\Delta P_{n,t},\Delta Q_{n,t}\}_ {n=1,...,N_b,t=1,...t^\text{in}}$;\\
    $\{\lambda_{j,t}^{co},\lambda_{j,t}^{cc},\mu_t\}_{j=1,...,6N_b,t=1,...t^\text{in}}$;\\
    Initial values  $\{\text{cost}_t^{exp}\}_{t=1,...t^\text{in}}$ ;\\
    Current conservativeness level $\epsilon_{T}$;\\
    Maximum iterations allowed $count^{\max}$; \\
    Convergence threshold $\Delta \epsilon^{\min}$\\
    }
    \Output {Updated conservativeness level $\epsilon_{T+1}$;
    }
    \Begin{
      Calculated $\text{cost}^{act}$ and $\text{CVaR}^{act}$ using \eqref{actual_Cost} and \eqref{actual_CVaR}; \\
      Calculated two partial derivatives using \eqref{gradient_Cost} and \eqref{gradient_CVaR}; \\
      $\Delta \epsilon  \leftarrow \epsilon_T$; $\epsilon \leftarrow \epsilon_T$; $count \leftarrow 0$\\ 
      \While{$count < count^{\max}$ and $\Delta \epsilon > \Delta \epsilon^{\min}$ }
      {
      Calculate gradient following \eqref{gradient}; \\
      Obtain $\epsilon'$ following \eqref{updating_rule}; \\
      Re-run \eqref{mod:convex_DRO} for $t^\text{in}$ times with $\epsilon'$ to obtain updated values of $\{\text{cost}_t^{exp}\}_{t=1,...t^\text{in}}$;\\
      $\Delta \epsilon \leftarrow |\epsilon-\epsilon'|$ ;
      $\epsilon \leftarrow \epsilon'$;
      $count \leftarrow count+1$ \\
      }
      $\epsilon_{T+1} \leftarrow \epsilon$\\
      \KwRet{$\epsilon_{T+1}$}
     }
    \caption{Adaptive Conservativeness Level}
    \label{alg:adaptive}
\end{algorithm}

\subsection{Convergence Analysis of Proposed Algorithm}
We first intuitively analyze the expected outcomes of Algorithm~\eqref{alg:adaptive} at each iteration $T$. Given that the partial derivative terms in \eqref{gradient} are always non-negative, the direction in which $\epsilon$ is updated, indicated by the sign of $\frac{\partial \mathcal{L}(\epsilon)}{\partial \epsilon}$, depends on  relative magnitudes of the expected versus actual cost and CVaR values. If $\text{Cost}_{\underline t}^{exp} > \text{Cost}_{\underline t}^{act}$, indicating that the DSO is overly conservative by planning for adverse scenarios that never materialize, then the value of $\epsilon$ should be decreased. Similarly, if $\text{CVaR}^{exp} > \text{CVaR}^{act}$, suggesting that the DSO has prepared excessively for constraint violations that occur less frequently in practice, then $\epsilon$ should also be reduced.

It is also possible that the two terms in \eqref{gradient} may indicate differing directions for updating $\epsilon$. This is because \eqref{mod:uncertain} considers two worst-case distributions that may not coincide: one associated with the maximum cost in \eqref{uncertain:obj} and the other corresponding to the maximum violation rate of the chance constraint in \eqref{uncertain:V_bound}. In this case, $\epsilon$ will be updated using a weighted sum of the two directions suggested by both the cost and CVaR, and the weights are the partial derivatives in \eqref{gradient_Cost} and \eqref{gradient_CVaR}. As highlighted in \cite{mieth2023data}, the values of $\lambda_{j}^{co}$ and $\lambda_{j}^{cc}$ can be interpreted as the marginal values of $\epsilon$ in the optimal solution of \eqref{mod:convex_DRO}. Therefore, the weights of the cost and the CVaR terms in \eqref{gradient} indicate their relative importance within the context of risk-averse decision-making.

Formally, we propose and prove the following proposition for the
convergence of $\epsilon_T$ based on Algorithm~\ref{alg:adaptive}:
\begin{mypro} \label{Convergence_each_iteration}
    (Convergence of $\epsilon_T$) For each iteration $T \le T^\text{out}$, Algorithm~\ref{alg:adaptive} will return a near optimal value of $\epsilon_T$, under the condition that $count^{\max}$ is adequately large, and both the learning rate $\chi$ and the convergence threshold $\Delta \epsilon^{\min}$ are set to sufficiently small values.
\end{mypro}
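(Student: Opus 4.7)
The plan is to cast the inner loop of Algorithm~\ref{alg:adaptive} as a standard gradient descent procedure on a continuously differentiable, lower-bounded univariate loss function, and then invoke a textbook convergence result to conclude that the iterates approach a stationary point, which by construction corresponds to a near-optimal conservativeness level.

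First I would establish structural properties of $\text{Cost}(\epsilon)$ and $\text{CVaR}(\epsilon)$ viewed as the optimal values of the parametric convex program \eqref{mod:convex_DRO} with $\epsilon$ entering linearly through the terms $\lambda^{co}_j \epsilon$ and $\lambda^{cc}_j \epsilon$. For each fixed feasible $(x,\tau,\lambda,s)$ the objective is affine in $\epsilon$ with non-negative slope $\sum_j \lambda^{co}_j \ge 0$, so taking the infimum gives a concave, non-decreasing, continuous function of $\epsilon$ on any bounded interval. By the envelope theorem, their derivatives, where defined, coincide exactly with the expressions $\sum_j \lambda^{co}_{j,\underline t}$ and $\frac{1}{t^\text{in}}\sum_t \mu_t \sum_j \lambda^{cc}_{j,t}$ in \eqref{gradient_Cost}--\eqref{gradient_CVaR}, and are piecewise constant and locally Lipschitz on any compact sub-interval avoiding the finitely many breakpoints at which the optimal basis of \eqref{mod:convex_DRO} changes. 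Note that during the inner loop only the expected quantities change with $\epsilon$; the actual values $\text{Cost}_{\underline t}^{act}$ and $\text{CVaR}^{act}$ are fixed from past observations.

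Second I would combine these facts via the chain rule to show that $\mathcal{L}(\epsilon)$ is continuously differentiable with a locally Lipschitz derivative on any bounded interval $[\underline\epsilon,\overline\epsilon] \subset \mathbb{R}_{\ge 0}$, the gradient being exactly \eqref{gradient}, and is bounded below by zero. This places us within the scope of the classical descent lemma: if $\chi < 1/L$ where $L$ is a local Lipschitz constant of $\mathcal{L}'$, then the iterates generated by the update \eqref{updating_rule} satisfy $\mathcal{L}(\epsilon^{(k+1)}) \le \mathcal{L}(\epsilon^{(k)}) - \frac{\chi}{2}(\mathcal{L}'(\epsilon^{(k)}))^2$. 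Since the sequence is monotonically non-increasing and bounded below, it converges, and summing the inequality forces $\mathcal{L}'(\epsilon^{(k)}) \to 0$. Consequently the step sizes $|\epsilon^{(k+1)} - \epsilon^{(k)}| = \chi|\mathcal{L}'(\epsilon^{(k)})|$ also vanish, so for any prescribed $\Delta\epsilon^{\min} > 0$ the stopping criterion is triggered after finitely many iterations, and if $count^{\max}$ is chosen larger than this number the while-loop exits without premature termination.

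Finally, upon exit we have $|\mathcal{L}'(\epsilon_{T+1})| \le \Delta\epsilon^{\min}/\chi$, which can be made arbitrarily small by choosing $\Delta\epsilon^{\min}$ small while keeping $\chi$ fixed; hence $\epsilon_{T+1}$ lies in an arbitrarily small neighborhood of a stationary point of $\mathcal{L}$, which is the operative notion of ``near optimality''. The hard part of the argument is that $\mathcal{L}$ need not be globally convex: it is the sum of squares of concave functions of $\epsilon$, so stationary points may include local minima that are not global, and the Lipschitz constant $L$ may blow up near basis-change breakpoints. The proof therefore only delivers convergence to a neighborhood of a local minimum on the current branch of the solution map, and a tighter global statement would require either a monotonicity argument on $\text{Cost}^{exp}(\epsilon) - \text{Cost}^{act}$ and $\text{CVaR}^{exp}(\epsilon) - \text{CVaR}^{act}$, or a finer case analysis across breakpoints of the active set of \eqref{mod:convex_DRO}.
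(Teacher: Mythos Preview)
Your argument is sound, but it takes a different route from the paper. The paper's own proof is a two-line convexity claim: it asserts that
\[
\frac{\partial^{2}\mathcal{L}(\epsilon)}{\partial\epsilon^{2}}
 = 2\left(\frac{\partial\text{Cost}(\epsilon)}{\partial\epsilon}\right)^{2}
 + 2\left(\frac{\partial\text{CVaR}(\epsilon)}{\partial\epsilon}\right)^{2}\ge 0,
\]
and then invokes the standard convergence of gradient descent on convex functions. Notice that this Hessian formula drops the cross terms $2(\text{Cost}^{exp}-\text{Cost}^{act})\,\partial^{2}\text{Cost}/\partial\epsilon^{2}$ and the analogous CVaR term; it is therefore exact only under the implicit assumption that $\text{Cost}(\epsilon)$ and $\text{CVaR}(\epsilon)$ are \emph{affine} in $\epsilon$, i.e., that the optimal $\lambda^{co}_j,\lambda^{cc}_j$ in \eqref{function_Cost}--\eqref{function_CVaR} do not vary with $\epsilon$. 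On a fixed optimal basis of \eqref{mod:convex_DRO} this holds, so the paper is effectively arguing piecewise.

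You, by contrast, treat $\text{Cost}(\epsilon)$ and $\text{CVaR}(\epsilon)$ as the concave, piecewise-linear value functions they actually are, refuse the global-convexity claim, and instead use the descent lemma to obtain convergence to a stationary point. What each approach buys: the paper's argument, under its implicit affine model, yields a clean global minimizer; yours is valid without that simplification but delivers only a local guarantee---which you correctly flag as the honest conclusion. Your final caveat about non-convexity and basis-change breakpoints is exactly the gap the paper's shortcut glosses over.
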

\begin{proof}
The essence of proving this proposition lies in showing the convexity of loss function $\mathcal{L}(\epsilon)$ with respect to $\epsilon$, in accordance with the well-established convergence theorem of the gradient descent method for convex functions \cite{boyd2004convex}. This requires an analysis of the second-order derivative:
\begin{align}
    \frac{\partial^{2} \mathcal{L}(\epsilon)}{\partial \epsilon^{2}} = 2\left(\frac{\partial \text{Cost}(\epsilon)}{\partial \epsilon}\right)^2 + 2\left(\frac{\partial \text{CVaR}(\epsilon)}{\partial \epsilon}\right)^2 \nonumber
\end{align}
which is always non-negtive. Therefore, the $\mathcal{L}(\epsilon)$ is convex in $\epsilon$, implying the convergence of the gradient-based method in  Algorithm~\ref{alg:adaptive}.
\end{proof}

The optimal value of $\epsilon$ conveys a tangible significance as it represents the weighted average of the two Wasserstein distances. The first is the distance between the empirical distribution $\hat{\mathbb{P}}_{N_s}$ and the distribution that results in $\text{Cost}{\underline t}^{act}$, while the second is the distance between $\hat{\mathbb{P}}_{N_s}$ and the distribution leading to $\text{CVaR}^{act}$. Essentially, the optimal value of $\epsilon$ reflects the distance between the empirical and true distributions of $\xi$.

\subsection{Conservativeness Change in Long Term}
Regrading the change of $\epsilon_T$ when $T$ increases, we propose and prove the following proposition:
\begin{mypro} \label{Convergence_long_term}
    (Convergence of $\epsilon_{\infty}$) Assuming the true distribution of $\xi$ is stationary (i.e., it does not change over time) and the iterations within the inner loop $t^\text{in} \to \infty$, then the optimal value of $\epsilon_{T} \to 0$ when $T \to \infty$.
\end{mypro}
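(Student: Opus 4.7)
The plan is to reduce the long-run behavior of $\epsilon_T$ to the convergence of the empirical measure $\hat{\mathbb{P}}_{N_s}$ toward the true distribution $\mathbb{P}_{\text{true}}$ of $\xi$, then to exploit the convex loss function used in Proposition~1. Since $\mathcal{L}(\epsilon) \ge 0$ for all $\epsilon \ge 0$, it suffices to show that $\mathcal{L}(0) \to 0$ as $T \to \infty$; together with the convexity of $\mathcal{L}$ established in the proof of Proposition~1, this will force the near-minimizer $\epsilon_T$ returned by Algorithm~\ref{alg:adaptive} toward $0$.

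First, I would connect sample size to the iteration counters. Because the learning procedure in Section~\ref{sec:learning} produces a fresh sample of $\xi$ at every inner-loop step and retains earlier samples across outer iterations, $t^\text{in} \to \infty$ with $T \to \infty$ implies $N_s \to \infty$. Under stationarity (and standard light-tail or bounded-support conditions on $\xi$ that are implicit in the Wasserstein-DRO setting of \cite{mohajerin2018data}), I would invoke the Fournier--Guillin concentration result to conclude that
\begin{equation}
d_W\!\left(\hat{\mathbb{P}}_{N_s},\mathbb{P}_{\text{true}}\right) \longrightarrow 0 \quad \text{almost surely as } N_s \to \infty. \nonumber
\end{equation}

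Second, I would evaluate the loss at $\epsilon=0$. With $\epsilon=0$ the ambiguity set $\mathcal{B}_0(\hat{\mathbb{P}}_{N_s})$ collapses to the singleton $\{\hat{\mathbb{P}}_{N_s}\}$, so the objective \eqref{convex_DRO:obj} reduces to a sample-average approximation and the chance constraint reduces to its empirical-CVaR counterpart. Since the integrand $\langle x,\xi\rangle$ is linear in $\xi$ and the CVaR surrogate is Lipschitz in the underlying distribution with respect to the $1$-Wasserstein metric, the Kantorovich--Rubinstein duality gives $\text{Cost}_{\underline t}^{exp}(0)\to\mathbb{E}^{\mathbb{P}_{\text{true}}}[\langle x,\xi\rangle]=\text{Cost}_{\underline t}^{act}$ and $\text{CVaR}^{exp}(0)\to \text{CVaR}^{act}$ as $N_s \to \infty$. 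Thus $\mathcal{L}(0)\to 0$, and the link noted at the end of the previous subsection (that the optimal $\epsilon$ is a weighted average of Wasserstein distances between $\hat{\mathbb{P}}_{N_s}$ and the realized worst-case distributions) makes the geometric picture concrete: both of those distances are bounded by $d_W(\hat{\mathbb{P}}_{N_s},\mathbb{P}_{\text{true}})$ via the triangle inequality, and hence vanish.

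Third, I would combine these ingredients. Convexity and nonnegativity of $\mathcal{L}$, together with $\mathcal{L}(0)\to 0$, imply that the minimizer $\epsilon_T^{\star} := \arg\min_{\epsilon\ge 0}\mathcal{L}(\epsilon)$ satisfies $\epsilon_T^{\star}\to 0$. Proposition~1 guarantees that Algorithm~\ref{alg:adaptive} returns a near-optimal $\epsilon_T$ for sufficiently small $\chi$ and $\Delta\epsilon^{\min}$, so $\epsilon_T\to 0$ as well.

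The main obstacle I anticipate is the Wasserstein-to-expectation continuity argument in the second step: one must legitimately interchange limits in a situation where the decision variable $x$, the empirical measure $\hat{\mathbb{P}}_{N_s}$, and the ``actual'' worst-case distribution used to define $\text{Cost}^{act}$ and $\text{CVaR}^{act}$ are all moving simultaneously. I would handle this by fixing a bounded feasible region $\mathcal{F}$ for $x$ (so that the Lipschitz constant of $\langle x,\xi\rangle$ in $\xi$ is uniform), and by invoking an epi-convergence argument to transfer the measure convergence into convergence of the optimal values; the chance-constraint term requires an additional step that exploits the piecewise-linear structure of the CVaR reformulation in \eqref{convex_DRO:CVaR_1}--\eqref{convex_DRO:CVaR_3}.
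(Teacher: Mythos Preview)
Your approach is genuinely different from the paper's. The paper argues informally in three steps: (i) as $T$ grows, the Wasserstein distance $d_W(\hat{\mathbb{P}}_{N_s},\mathbb{P}_{\text{true}})$ ``should decrease monotonically''; (ii) $\epsilon=0$ is attainable once the empirical and true distributions coincide; (iii) at that point the gradient in \eqref{gradient} vanishes, so $\epsilon_\infty$ is stable. Your route---invoking Fournier--Guillin concentration to get $d_W(\hat{\mathbb{P}}_{N_s},\mathbb{P}_{\text{true}})\to 0$ a.s., then Kantorovich--Rubinstein/Lipschitz continuity to drive $\mathcal{L}(0)\to 0$, then convexity to locate the minimizer---is considerably more rigorous in spirit and avoids the paper's questionable monotonicity claim (empirical Wasserstein distances are random and do not decrease monotonically pathwise).

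However, your third step has a genuine gap as written. The implication ``$\mathcal{L}$ convex, $\mathcal{L}\ge 0$, and $\mathcal{L}(0)\to 0$ $\Rightarrow$ $\arg\min_{\epsilon\ge 0}\mathcal{L}(\epsilon)\to 0$'' is false: take $\mathcal{L}_T(\epsilon)=T^{-1}(\epsilon-1)^2$, which is convex and nonnegative with $\mathcal{L}_T(0)=T^{-1}\to 0$, yet the minimizer is $1$ for every $T$. What you are missing is control of the \emph{curvature} of $\mathcal{L}$, not just its value at $0$. In the paper's setting $\mathcal{L}(\epsilon)=(a\epsilon+b)^2+(c\epsilon+d)^2$ with $a=\sum_j\lambda_{j,\underline t}^{co}\ge 0$, $c=t^{\text{in}\,-1}\sum_t\mu_t\sum_j\lambda_{j,t}^{cc}\ge 0$, and $b,d$ the expected-minus-actual gaps at $\epsilon=0$; the unconstrained minimizer is $-(ab+cd)/(a^2+c^2)$. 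To conclude $\epsilon_T^\star\to 0$ you must argue either that $a,c$ stay bounded away from zero (so the denominator does not collapse while $b,d\to 0$), or directly that $ab+cd\to 0$ at a rate dominated by $a^2+c^2$. Your parenthetical triangle-inequality remark---bounding the ``weighted Wasserstein distance'' interpretation of $\epsilon^\star$ by $d_W(\hat{\mathbb{P}}_{N_s},\mathbb{P}_{\text{true}})$---is actually the cleaner fix here, and is essentially what the paper relies on; promote it from a side comment to the main argument in step three.
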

\begin{proof}
 Proving this proposition entails the proofs of three key aspects: i) the optimal value of $\epsilon_T$ decreases monotonically as $T$ increases; ii) the minimal possible value for $\epsilon_T$ is 0, and iii) the optimal value of $\epsilon_T$ is stable when $T \to \infty$. First, with the assumption that $t^\text{in} \to \infty$ and the stationary nature of the true distribution, the sample distribution characterized by $\{\xi_t\}_{t =1}^{ t^\text{in}}$ effectively mirrors the true distribution of $\xi$ for each iteration $T$. As $T$ progresses, the inclusion of additional samples into the historical dataset implies that the Wasserstein distance between the empirical distribution $\hat{\mathbb{P}}_{N_s}$ and the true distribution should decrease monotonically.
 Second, $\epsilon = 0$ implies a perfect match between the empirical and true distributions of $\xi$, which is achievable through the learning algorithm outlined in Section~\ref{sec:learning} given sufficient learning iterations $T$. In such scenarios, the expected and actual costs, as well as the expected and actual CVaR, converge and thus  yield a zero gradient in \eqref{gradient}. This outcome indicates the stability of $\epsilon_{\infty}$, thereby proving the third point.
\end{proof}

However, the two assumptions made in Proposition~\ref{Convergence_long_term} might not always apply in real-world scenarios. First, the inner loop is limited to a finite number of iterations, which means the sample distribution might not accurately represent the true distribution of $\xi$. Consequently, gradient estimates could be imprecise or noisy due to the reliance on a limited number of samples, causing fluctuations in the convergence trajectory. This scenario bears resemblance to the behavior of stochastic gradient descent (SGD) as opposed to gradient descent (GD), where the former is known for its variability in the path towards convergence due to its sample-based gradient updates.

On the other hand, the true distribution of $\xi$ may not be stationary due to factors like the installation of new DERs, which may change nodal flexibility regions, or fluctuations in residential electricity prices which may affect DER participation costs. In this case, the empirical distribution learned by the DSO may not perfectly match the true distribution, indicating that the optimal value of $\epsilon$ will not converge to 0. Regarding this, we have the following remark:
\begin{myremark} \label{stationary}
   If changes in the true distribution of $\xi$ are slow and smooth, the optimal value of $\epsilon_T$ is expected to stabilize at a small positive value when $T \to \infty$. This reflects the rate of change in the true distribution.
\end{myremark}

Proposition~\ref{Convergence_long_term} also informs what steps should decision-makers take once $\epsilon_T$ reaches 0. With the optimal level of conservativeness being 0, a transition from a risk-averse to a risk-neutral stance is recommended.
This shift involves minimizing the expected cost rather than the worst-case costs in the objective and considering the expected rate of constraint violations instead of the worst-case violation rate.

If $T$ keeps increasing, more samples will be added into the empirical distribution without providing new information, it may be appropriate to stop the learning process. Furthermore, given that the computational complexity of a sample-based decision-making model depends on the number of samples, reducing the sample count through sample selection techniques 
or diminishing the size of the uncertainty set could be considered to improve the efficiency.

\section{Numerical Experiments}
\label{sec:Case_study}
This section considers interactions between one DSO and one DERA to show the effectiveness of the proposed method. We set the initial value of $\epsilon =0.01$ and the learning rate $\chi = 0.001$. The iteration number in the inner loop is $t^\text{in}=5$, and the true distribution of $\xi$ is assumed to follow a uniform distribution between -1 and 2. 
Furthermore, We assume that initially the DSO has access to 10 samples of $\xi$. This initial count, set lower than what might be found in real-world situations, is chosen to more effectively highlight the convergence trend of $\epsilon$ as the number of available samples increases.
All the numerical experiments are conducted using the Gurobi solver \cite{gurobi} in Python and are performed on a PC with an Intel Core i9, 2.20 GHz with 16 GB RAM. 

We first examine the convergence of the optimal conservativeness level $\epsilon$ when $\epsilon_T$ is obtained using Algorithm~\ref{alg:adaptive} at each iteration $T$. Fig.~\ref{fig:convergence} (a) shows the value of $\epsilon_T$ as $T$ increases, where the black line is a representative convergence trajectory, and the shaded area represents the confidence interval derived from conducting the experiment for 50 times. It is clear that $\epsilon_T$ converges to 0, with some fluctuations along the convergence trajectory. These fluctuations are attributed to using only five samples of $\xi$ in the inner loop to approximate its true distribution.

\begin{figure}[htbp]
    \centering
    \includegraphics[width=1\linewidth]{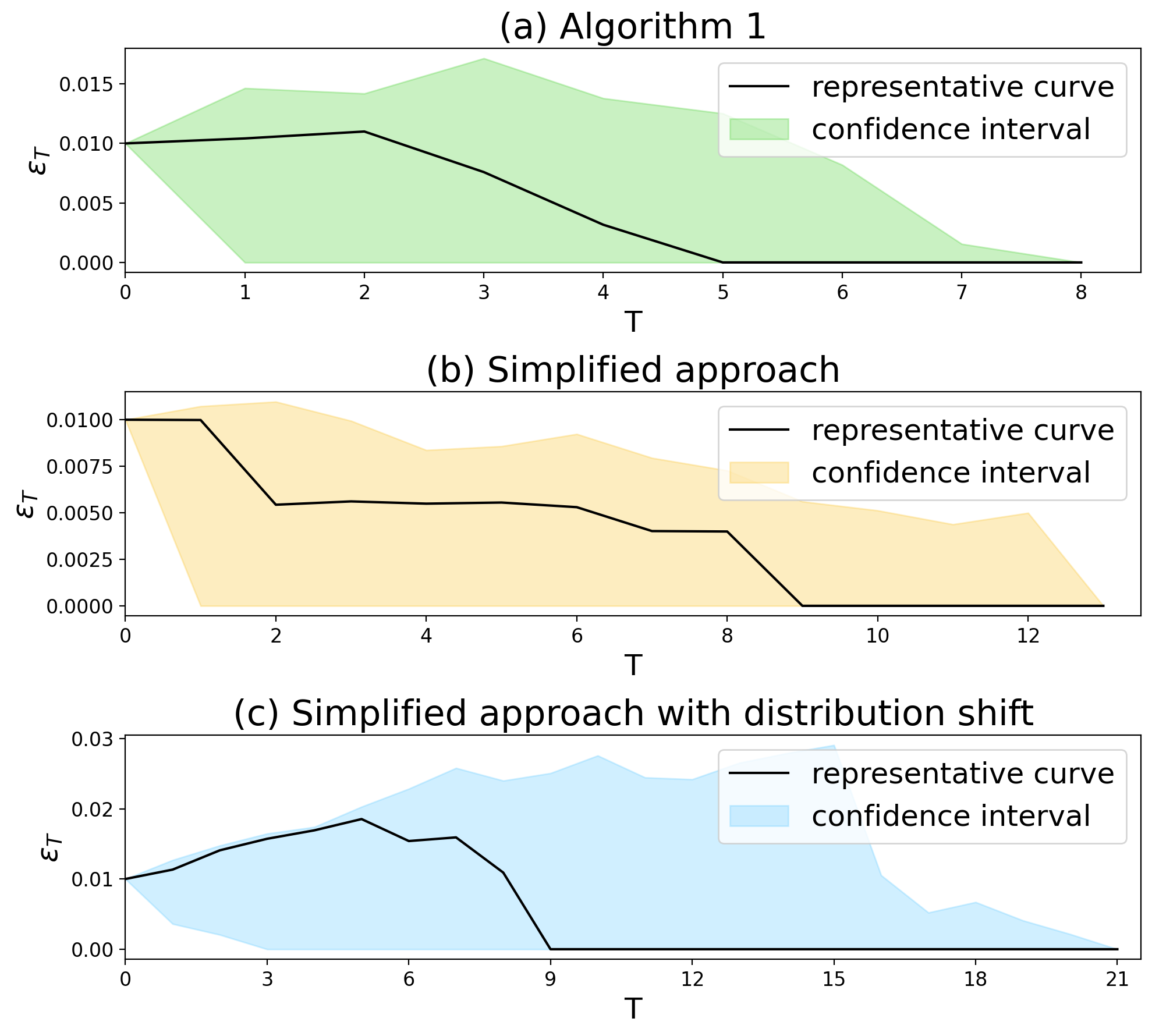}
    \caption{Convergence trajectory of optimal conservativeness level.}
    \label{fig:convergence}
\end{figure}

The computational challenge presented by Algorithm~\ref{alg:adaptive} stems from iteratively solving the distributionally robust incentive design problem in \eqref{mod:convex_DRO}. A practical simplification approach involves updating $\epsilon$ just once per iteration $T$ using \eqref{updating_rule}, rather than continuously updating until the optimal value of $\epsilon_T$ is reached. Fig.\ref{fig:convergence} (b) illustrates the convergence trajectory of $\epsilon_T$ using this simplified approach. While the convergence is slower compared to the trajectory shown in Fig.\ref{fig:convergence} (a), $\epsilon_T$ still approaches 0 after several iterations. This highlights a viable pathway for enhancing the computational efficiency of the proposed algorithm.

The robustness of this simplified method is further evaluated in a more realistic scenario with shifts in the true distribution of $\xi$. At each iteration $T$, the DERA responses in the inner loop are generated based on a varying distribution of $\xi$. More specifically, the bounds of the uniform distribution of $\xi$, initially set at -1 and 2, are expanded at a constant rate of 0.1 in every iteration. Fig.~\ref{fig:convergence} (c) shows that $\epsilon$ converges to 0 despite the distribution shift. 

\section{Conclusion and Future Work}
\label{sec:conclusion}
This paper proposes to learn parameters of the distributionally robust optimization applied to incentive design for voltage regulation 
 with adjustable conservativeness.  The incentive design problem is reformulated as a convex model, facilitating efficient resolution with standard solvers. We also prove the convergence of the gradient-based method for updating the conservativeness level. Additionally, we analyze the optimal conservativeness level, establishing a connection between risk-averse decision-making and the learning process. Numerical experiments on a single DSO and a single DERA scenario shows the effectiveness of our proposed approach. 

One limitation of the proposed method is its assumption of a fully rational DERA, meaning it presumes the DER responses are dictated by an optimization model. In future research, we plan to employ a more realistic decision-making model for DERAs, for instance, by incorporating the concept of bounded rationality \cite{marin2021electric}.


\bibliographystyle{IEEEtran}
\bibliography{literature}

\end{document}